\newcommand{\dg}{\dagger}
\newcommand{\UN}{\mathcal{U}(N)}
\newcommand{\SUN}{\mathcal{SU}(N)}
\newcommand{\R}{{\mathbb  R}}  \numberwithin{equation}{section} \newtheorem{thm}{\bf Theorem}[section]
\newtheorem{lem}[thm]{\bf Lemma} \newtheorem{prop}[thm]{\bf Proposition}
\DeclareMathOperator{\tr}{tr}
\begin{document}

\title{Constraint optimization and $\SUN$ quantum control landscapes}

\author{Petre Birtea, Ioan Ca\c su, Dan Com\u{a}nescu\\
{\small Department of Mathematics, West University of Timi\c soara} 
\\
{\small Bd. V. P\^ arvan, No 4, 300223 Timi\c soara, Rom\^ania}\\
{\small Email: petre.birtea@e-uvt.ro, ioan.casu@e-uvt.ro, dan.comanescu@e-uvt.ro}}
\date{}

\maketitle

\begin{abstract}
We develop the embedded gradient vector field method, introduced in \cite{birtea-comanescu} and \cite{Birtea-Comanescu-Hessian}, for the case of the special unitary group $\SUN$ regarded as a constraint submanifold of the unitary group $\UN$. The optimization problem associated to the trace fidelity cost function defined on $\SUN$ that appears in the context of $\SUN$ quantum control landscapes is completely solved using the embedded gradient vector field method. We prove that for $N\geq 5$, the landscape is not $\SUN$-trap free, there are always kinematic local extrema that are not global extrema.
\end{abstract}
{\bf Keywords:} optimization; constraint manifold; special unitary group; quantum control; trace fidelity\\
{\bf MSC Subject Classification 2020:} 53Bxx, 53C17, 58Cxx, 81Q93, 93B27.

\maketitle

\section{Introduction}

In the famous paper \cite{Rabitz-Science} it has been asserted that no sub-optimal local extrema exist as traps for the transition probability landscape corresponding to quantum control systems. However, this has been proved to be more intricate than initially thought, as it was first pointed out in \cite{pechen-1} and subsequently by many authors, including \cite{fouquieres}, \cite{zhdanov-1}, \cite{zhdanov-2}, among others. The sub-optimal controls can appear due to the complexity of the optimization problem, as it is an optimization problem for the composition of two maps, the so-called endpoint map and the fidelity function. Up to this day, the problem of local traps in quantum landscapes is not completely clarified. In the present paper our objective is to approach this problem from a different optimization perspective. We present the mathematical model of the real life problem of quantum control, we point out the difficulties appearing in analyzing the quantum landscape, and we propose a new method to tackle this problem. Most studies have been done for $\UN$ quantum control landscapes. For a detailed explanation and physical interpretations, see \cite{rabitz-2007}, \cite{Rabitz-relaxed}, \cite{palao}, \cite{Rabitz-Science}, \cite{russell}, \cite{russell-wu}, and \cite{shlomo}. In this paper we will study the case of $\SUN$ quantum control landscapes. This study involves solving an optimization problem on $\SUN$ which we will show that it is quite different from an analogous optimization problem on $\UN$.

The quantum control behavior on the special unitary group $\SUN$ is described by the Schr\"{o}dinger equation
\begin{equation}\label{control}
\imath\hbar \dot U(t)=\left(H_0+\sum_{i=1}^mu_i(t)H_i\right)U(t),~~~U(0)=U_0\in \SUN,
\end{equation}
where $H_0,H_1,\dots,H_m$ are traceless Hermitian operators, the control vector $\bar u=(u_1,\dots,u_m)$ belongs to  $L^2([0,T],\mathbb{R}^m)$, and $N$ is the number of quantum states. In what follows, we will always assume that the above quantum control system is globally controllable. For a more detailed discussion on this hypothesis see \cite{altafini}, \cite{russell}, and \cite{russell-wu}.

To the differential control system \eqref{control}, we associate the endpoint map based at $U_0$,
$\hbox{End}_{U_0}:L^2([0,T],\mathbb{R}^m)\to \SUN,~\hbox{End}_{U_0}(\bar u)=U_{\bar u}(T),$
where $U_{\bar u}(T)$ is the point reached at time $T$ going along the solution of the system \eqref{control} starting at the initial point $U_0$. The endpoint map is smooth in the Fr\' echet sense and a formula for its differential is explicitly given in \cite{agrachev} and \cite{montgomery}. 

The performance function used in quantum control theory, called \textbf{quantum control landscape}, is
$J:L^2([0,T],\mathbb{R}^m)\to \mathbb{R},~~J(\bar u)=\dfrac{1}{N}\hbox{Re tr}\left(A^{\dagger}U_{\bar{u}}(T)\right),$
where $A$ is a desirable quantum target gate.

The quantum control system \eqref{control} is called $\SUN$-\textbf{trap free} if the function $J$ has only global extrema (initial definition in \cite{rabitz-2007}, \cite{Rabitz-Science}) or global extrema and saddle points (relaxed definition in \cite{Rabitz-relaxed}).

The quantum control landscape $J$ is the composition of two maps,
$$J(\bar u)=\widetilde{G}\circ \hbox{End}_{U_0}(\bar u),$$
where $\widetilde{G}:\SUN\to \mathbb{R},~\widetilde{G}(U)=\dfrac{1}{N}\hbox{Re tr}\left(A^{\dagger}U\right)$ is called the \textbf{trace fidelity function}. In applied problems other types of cost functions $\widetilde{G}$ are also considered, see \cite{dominy-rabitz}, \cite{fouquieres}, \cite{pechen-2}, \cite{pechen-3}, \cite{pechen-5}, \cite{pechen-1},  \cite{pechen-4}, \cite{Rabitz-Science}, \cite{russell-wu}.

The critical points of $\widetilde{G}$ are called \textit{kinematic} and those of $J$ are called \textit{dynamic}, see \cite{dominy-rabitz}, \cite{pechen-1}, \cite{russell-wu}. There are two layers in finding and characterizing the critical points of the quantum control landscape functional $J$, due to the fact that the differential of $J$ is a composition of two linear maps, as follows,
$$
L^2([0,T],\mathbb{R}^m) \xrightarrow{d_{\bar u}\text{End}_{U_0}} T_{_{\text{End}_{U_0}({\bar u})}}\SUN \xrightarrow{d_{\text{End}_{U_0}({\bar u})}\widetilde{G}} \R.$$

{\bf First layer}. Let ${\bar u}\in L^2([0,T],\mathbb{R}^m)$ be a control vector such that $\text{End}_{U_0}({\bar u})$ is a kinematic critical point, i.e. the linear map $d_{\text{End}_{U_0}({\bar u})}\widetilde{G}$ is identically zero. Consequently, ${\bar u}$ is also a dynamical critical point.

{\bf Second layer}.  Let ${\bar u}\in L^2([0,T],\mathbb{R}^m)$ be a control vector such that $\text{End}_{U_0}({\bar u})$ is {\bf NOT} a kinematic critical point, i.e. $\text{End}_{U_0}({\bar u})$ is not a critical point for $\widetilde{G}$. Nevertheless, $\bar{u}$ can still be a dynamical critical point, i.e. a critical point for quantum control landscape $J$.
As $\text{End}_{U_0}({\bar u})$ is not a kinematic critical point, we have $\nabla_{\SUN}\widetilde{G}(\text{End}_{U_0}({\bar u}))\neq 0$ and the following space decomposition holds\footnote{$\left<\nabla_{\SUN}\widetilde{G}(\text{End}_{U_0}({\bar u}))\right>$ is the linear subspace generated by $\nabla_{\SUN}\widetilde{G}(\text{End}_{U_0}({\bar u}))$.}
$$T_{_{\text{End}_{U_0}({\bar u})}}\SUN =\text{Ker} d_{\text{End}_{U_0}({\bar u})}\widetilde{G}\oplus \left<\nabla_{\SUN}\widetilde{G}(\text{End}_{U_0}({\bar u}))\right>.$$
Consequently, a control vector ${\bar u}\in L^2([0,T],\mathbb{R}^m)$, with the property that $\text{End}_{U_0}({\bar u})$ is not a kinematic critical point, is a dynamic critical point if and only if\footnote{The perpendicularity in (ii) is taken with respect to the bi-invariant metric on $\SUN$.} $\nabla_{\SUN}\widetilde{G}(\text{End}_{U_0}({\bar u}))\perp \text{Im}d_{\bar u}\text{End}_{U_0}$. This perpendicularity condition and $\text{End}_{U_0}({\bar u})$  not being a kinematic critical point imply that $d_{\bar u}\text{End}_{U_0}$ is not full rank.
\medskip

The above necessary and sufficient condition can be written equivalently in terms of non-transver\-sality property of the endpoint map as it is presented in \cite{russell-wu}. Although, dynamical critical points can appear, see \cite{pechen-1}, using results from differential topology regarding generic property of transversality condition, see \cite{abraham}, it is argued that this situation is rare (of null measure). It is concluded in \cite{russell-wu} that "dynamical landscape almost always possesses the same critical point structure as the kinematical one". 
The absence of traps in the functional (not in the kinematic) space has been rigorously proved in \cite{pechen-2}, \cite{pechen-3}, \cite{pechen-4}, \cite{pechen-5} in problems like the Landau-Zener system, the control of quantum transmission of particles through a potential barrier, ultra-fast controls. 
\medskip

In this paper, using embedded gradient vector field method, we will prove that for $N\geq 5$, the landscape is not $\SUN$-trap free (there are always kinematic local extrema that are not global extrema).\medskip

\section{Optimization on the special unitary group $\SUN$}

We begin by recalling the construction of the embedded gradient vector field method and we compute this vector field for a cost function defined on the special unitary group $\SUN$, that we regard as a constraint submanifold of the unitary group $\UN$.
We will apply these results in the next section to the trace fidelity optimization problem on $\SUN$. Optimization on constraint manifolds is an old problem in mathematics with a huge amount of applications. One of the most fruitful approach is the projected gradient method, see the recent works \cite{absil-carte}, \cite{absil-malick}, \cite{boumal-2}, \cite{boumal-1}, \cite{edelman}, \cite{gao-1}, \cite{gao-2}, \cite{hu}, \cite{mishra}.

Let $\mathcal{S}\subset \mathfrak{M}$ be a submanifold of a Riemannian manifold $(\mathfrak{M},{\bf g})$, that can be described by a set of constraint functions, i.e. $\mathcal{S}={\bf F}^{-1}(c)$, where ${\bf F}=(F_1,\dots,F_k):\mathfrak{M}\rightarrow \R^k$ is a smooth map and $c\in \R^k$ is a regular value of ${\bf F}$. We endow $\mathcal{S}$ with the induced metric, hence $(\mathcal{S},{\bf g}_{_{ind}})$ becomes itself a Riemannian manifold. 

For solving optimization problems one needs, in general, to compute the gradient vector field and the Hessian operator of a smooth cost function $\widetilde{G}:(\mathcal{S},{\bf g}_{_{ind}})\rightarrow \R$. The Riemannian geometry of the submanifold $\mathcal{S}$ can be more complicated than the Riemannian geometry of the ambient manifold $\mathfrak{M}$. In what follows, we show how we can compute the gradient vector field and the Hessian operator of $\widetilde{G}$ using only the geometry of the ambient manifold $(\mathfrak{M},{\bf g})$.

Let $G:(\mathfrak{M},{\bf g})\rightarrow \R$ be a smooth prolongation of $\widetilde{G}$, i.e. $\widetilde{G}=G_{|\mathcal{S}}$. In \cite{birtea-comanescu}, \cite{Birtea-Comanescu-Hessian}, \cite{5-electron}, it has been proved that
\begin{equation}\label{got}
\nabla _{{\bf g}_{_{ind}}}\widetilde{G}(s)=\partial_{\bf g} G(s),\,\,\,\forall s\in \mathcal{S},
\end{equation}
where $\partial_{\bf g} G$ is the unique vector field defined on the open set of regular points of the constraint function $\mathfrak{M}^{reg}\subset \mathfrak{M}$ that is tangent to the foliation generated by ${\bf F}$ having property \eqref{got}. We call $\partial_{\bf g} G$ {\bf the embedded gradient vector field} and it  is given by the following formula:
\begin{equation}\label{embedded}
\partial_{\bf g} G(s)=\nabla_{\bf g} G(s)-\sum\limits_{i=1}^k\sigma_{\bf g}^{i}(s)\nabla_{\bf g}F_i(s).
\end{equation}
The Lagrange multiplier functions $\sigma_{\bf g}^{i}:\mathfrak{M}^{reg}\rightarrow \R$ are defined by

\begin{equation}\label{sigma-101}
\sigma^i_{\bf g}(s):=\frac{\det \left(\text{Gram}_{(F_1,\ldots ,F_{i-1},G, F_{i+1},\dots,F_k)}^{(F_1,\ldots , F_{i-1},F_i, F_{i+1},...,F_k)}(s)\right)}{\det\left(\text{Gram}_{(F_1,\ldots ,F_k)}^{(F_1,\ldots ,F_k)}(s)\right)},
\end{equation}
where
\begin{equation*}\label{sigma}
\text{Gram}_{(g_1,...,g_s)}^{(f_1,...,f_r)}=\left[%
\begin{array}{cccc}
  {\bf g}(\nabla_{\bf g} g_1,\nabla_{\bf g}f_{1}) & ... & {\bf g}(\nabla_{\bf g} g_s,\nabla_{\bf g} f_{1}) \\
  \vdots & \ddots & \vdots \\
  {\bf g}(\nabla_{\bf g} g_1,\nabla_{\bf g}f_r) & ... & {\bf g}(\nabla_{\bf g} g_s,\nabla_{\bf g} f_r) 
\end{array}%
\right].
\end{equation*}

Also, in \cite{Birtea-Comanescu-Hessian}, \cite{5-electron} it has been proved that 
\begin{equation}\label{Hessian-general}
\text{Hess}_{{\bf g}_{_{ind}}}\, \widetilde{G}(s) =\left(\text{Hess}_{\bf g}\,G(s)-\sum_{i=1}^k\sigma_{\bf g}^{i}(s)\text{Hess}_{\bf g}\, F_i(s)\right)_{|T_s \mathcal{S}\times T_s \mathcal{S}}.
\end{equation}

The submanifold $\mathcal{S}$ will be the constraint submanifold $\SUN$ endowed with the induced metric of the unitary group $\mathcal{U}(N)$, which will play the role of the ambient manifold $\mathfrak{M}$. We solve the optimization problem generated by ${G}_{|{\SUN}}$, where $G:\mathcal{U}(N)\rightarrow \R$ is a smooth cost function. 
More precisely, our optimization problem is
$$
\underset{S\in \SUN}{\mathrm{argmin}}~{G}_{|\SUN}(S).
$$

The real Lie group of unitary matrices\footnote{By $U^{\dg}$ we denote the transpose conjugate of the matrix $U$.} is 
$$\mathcal{U}(N)=\{U\in \mathcal{M}_N(\mathbb{C})\,|\,U^{\dagger}U=UU^{\dg}=\mathbb{I}_N\}.$$

The tangent space to the manifold $\mathcal{U}(N)$ in a point $U\in \mathcal{U}(N)$ is given by
\begin{equation*}
T_U \mathcal{U}(N)=\{\Omega U\,|\,\Omega+\Omega^{\dg}=\mathbb{O}_N\}.
\end{equation*}
We consider the bi-invariant Riemannian metric on the unitary group 
$$\left<\Omega_1U,\Omega_2U\right>_{\mathcal{U}(N)}=\frac{1}{2}\text{Re}\tr(\Omega_1^{\dg}\Omega_2).$$

In order to regard the special unitary group as a constraint manifold, we identify the complex numbers with points in $\R^2$ and the complex numbers having modulus 1 with the points in the submanifold $S^1\subset \R^2$. We consider the smooth function $\det :\mathcal{U}(N)\rightarrow S^1$ and $h:S^1\rightarrow \R$ a local chart around the point $(1,0)\in S^1$.

The special unitary group 
$$\mathcal{SU}(N)=\{S\in\UN\,|\,\det S=1\}$$ is the level set $F_h^{-1}(c)$, where $F_h:\UN\rightarrow \R$, $F_h(U)=(h\circ\det) (U)$ and $c=h(1,0)$. 

\begin{lem}
The value $c$ is a regular value for $F_h$, therefore $\SUN$ is a constraint submanifold of $\UN$.
\end{lem}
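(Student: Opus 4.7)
The plan is to show directly that $d_U F_h : T_U\UN \to \R$ is nonzero at every $U \in \SUN$, which (the codomain being one-dimensional) is equivalent to surjectivity and hence to $c$ being a regular value. Since $F_h = h \circ \det$, the chain rule gives
$$d_U F_h = d_{(1,0)} h \circ d_U \det.$$
Because $h$ is a local chart around $(1,0) = \det U$, the differential $d_{(1,0)} h$ is a linear isomorphism from $T_{(1,0)} S^1$ to $\R$. Consequently it suffices to prove that $d_U \det : T_U \UN \to T_{(1,0)} S^1$ is surjective for every $U \in \SUN$.

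Next I would compute $d_U \det$ explicitly using the already recorded description of the tangent space. Given $\Omega U \in T_U\UN$ with $\Omega + \Omega^{\dg} = \mathbb{O}_N$, the curve $U(t) = e^{t\Omega} U$ lies in $\UN$, satisfies $U(0) = U$ and $\dot U(0) = \Omega U$, and the multiplicativity of the determinant together with $\det(e^{t\Omega}) = e^{t\tr\Omega}$ yields
$$d_U \det(\Omega U) = \tr(\Omega)\,\det(U).$$
Since $\Omega$ is skew-Hermitian, $\tr(\Omega) \in i\R$, which is consistent with the identification of $T_{(1,0)} S^1$ with the imaginary axis in $\mathbb{C}\cong\R^2$ used in the excerpt. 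For $U \in \SUN$ this simplifies further to $d_U \det(\Omega U) = \tr(\Omega)$.

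Finally, to exhibit surjectivity I would take the one-parameter family $\Omega_t = it\,\mathbb{I}_N$ for $t \in \R$; each $\Omega_t$ is skew-Hermitian, and $\tr(\Omega_t) = iNt$ sweeps out all of $i\R \cong T_{(1,0)}S^1$ as $t$ varies over $\R$. Hence $d_U\det$ is surjective at every $U\in\SUN$, and therefore so is $d_UF_h$, proving that $c$ is a regular value of $F_h$ and that $\SUN = F_h^{-1}(c)$ is a constraint submanifold of $\UN$.

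There is no genuine obstacle in this argument: it is a one-line chain rule reduction combined with the standard computation of $d\det$ on a matrix Lie group. The only point that requires any care is correctly identifying $T_{(1,0)}S^1$ with $i\R$ and then exhibiting a single family of skew-Hermitian matrices (the purely imaginary multiples of the identity) whose trace already covers this line, which simultaneously disposes of the surjectivity requirement.
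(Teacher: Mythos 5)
Your proposal is correct and follows essentially the same route as the paper: the chain-rule factorization $d_SF_h=d_{(1,0)}h\circ d_S\det$, the computation $d_S\det(\Omega S)=\tr\Omega$ for $S\in\SUN$, and the observation that $d_{(1,0)}h$ is an isomorphism because $h$ is a chart. Your explicit witness $\Omega=\imath t\,\mathbb{I}_N$ for surjectivity of $d_S\det$ is a small but welcome addition that the paper leaves implicit.
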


\begin{proof}
We have to prove that for any $S\in F_h^{-1}(c)$, the linear map $\text{d}_S F_h:T_S\UN\rightarrow \R$ is surjective. By the definition of the constraint function $F_h$, we have 
\begin{equation*}
\text{d}_S F_h=\text{d}_{(1,0)}h\circ \text{d}_S\det,
\end{equation*}
where $\text{d}_S\det:T_S\UN\rightarrow T_{(1,0)}S^1$ and $\text{d}_{(1,0)}h:T_{(1,0)}S^1\rightarrow \R$.

The set equality
$T_{(1,0)}S^1=\{(1,b)\,|\,b\in \R\}$ holds.
The map $\tau(1,b)=b\imath$ identifies the tangent space $T_{(1,0)}S^1$ with the set of purely imaginary complex numbers.

For a tangent vector $\Omega S\in T_S\UN$, we easily obtain that
$\text{d}_S\det (\Omega S)=\det S\tr(S^{\dg}\Omega S)=\tr \Omega$. To write $\text{d}_S\det (\Omega S)$ as a tangent vector in $T_{(1,0)}S^1$, we use the identification $\tau$ and, consequently, 
\begin{equation*}
 \text{d}_S\det (\Omega S)=(1,-\imath\tr\Omega),
 \end{equation*}
which shows that $ \text{d}_S\det$ is a surjective map. 
Also, because $h$ is a local chart, $\text{d}_{(1,0)}h$ is a bijection and hence we obtain that $\text{d}_S F_h$ is a surjective map.

\end{proof}

For explicit computations, we need to use specific choices for the local chart $h$. One example of such a local chart is the stereographic projection from the West, i.e. from the point $(-1,0)\in S^1$. In this case we have $h_{W}:S^1\backslash \{(-1,0)\}\rightarrow \R$, $h_{W}(x,y)=\dfrac{y}{x+1}$. Denoting $z=x+\imath y $ and taking into account that $|z|=1$ for $z\in S^1$, we can write $h_W(z)=\dfrac{\text{Im}(z)}{\text{Re}(z)+1}=-\imath\dfrac{z-1}{z+1}$ and its derivative is $h'_W(z)=-\dfrac{2\imath}{(z+1)^2}$.

Consequently, the constraint function $F_{h_{W}}:\UN\rightarrow \R$ is given by $F_{h_{W}}(U)=\dfrac{\text{Im}(\det U)}{\text{Re}(\det U)+1}$ and we have $\SUN=F_{h_{W}}^{-1}(0)$. 


\subsection{The embedded gradient vector field on $\SUN$}

By a straightforward computation and using the equality $\text{d}_U\det (\Omega U)=\det U\tr(U^{\dg}\Omega U)=\det U\tr\Omega$ for $U\in\UN$, we obtain
\begin{align}\label{diferentiala-UN-1212}
\text{d}_UF_{h_{W}}(\Omega U)  = h'_{W}(\det U)\text{d}_U\det(\Omega U)= -2\imath\frac{\det U}{(\det U+1)^2} \tr \Omega.
\end{align}

For $S\in \SUN$, the above equality becomes
\begin{equation}\label{diferentiala-specific-123}
\text{d}_SF_{h_{W}}(\Omega S)=-\frac{\imath}{2}\tr \Omega.
\end{equation}

For a smooth cost function $G:\UN\rightarrow \R$, using the formulas \eqref{embedded} and \eqref{sigma-101}, we obtain the formula for the gradient vector field on $\SUN$ of the restricted cost function $G_{|{\SUN}}$ expressed as a vector field in the ambient space $\UN$,
\begin{align*}
\nabla_{\SUN}\left(G_{|{\SUN}}\right)(S) & =\partial_{\UN}G(S)  \nonumber\\
& =\nabla_{\UN}G(S)-\frac{\left<\nabla_{\UN}G(S),\nabla_{\UN}F_{h_{W}}(S)\right>_{\UN}}{\left<\nabla_{\UN}F_{h_{W}}(S),\nabla_{\UN}F_{h_{W}}(S)\right>_{\UN}}\nabla_{\UN}F_{h_{W}}(S).
\end{align*}
In order to detail the above formula, we need to compute $\nabla_{\UN}F_{h_{W}}(S)$. By the definition of the gradient, for $S\in \SUN$ and $\Omega$ a skew-Hermitian matrix, we have
$$\text{d}_SF_{h_{W}}(\Omega S)=\left<\nabla_{\UN}F_{h_{W}}(S),\Omega S\right>_{\UN}.$$ 
Taking into account \eqref{diferentiala-specific-123} the previous equality becomes
$$-\frac{\imath}{2}\tr \Omega=\left<\nabla_{\UN}F_{h_{W}}(S)S^{\dg},\Omega\right>_{\UN},$$
which is equivalent to 
$$\left<-\imath\mathbb{I}_N,\Omega\right>_{\UN}=\left<\nabla_{\UN}F_{h_{W}}(S)S^{\dg},\Omega\right>_{\UN}.$$
It follows that 
$$\nabla_{\UN}F_{h_{W}}(S)=-\imath S.$$
Consequently,
$$\left<\nabla_{\UN}F_{h_{W}}(S),\nabla_{\UN}F_{h_{W}}(S)\right>_{\UN}=\frac{N}{2}$$
and 
$$\left<\nabla_{\UN}G(S),\nabla_{\UN}F_{h_{W}}(S)\right>_{\UN}=\frac{\imath}{2}\tr(S^{\dg}\nabla_{\UN}G(S)).$$

Synthesizing the above computations, we obtain the following formula for the gradient vector field of a cost function defined on the special unitary group $\SUN$ and a necessary and sufficient condition for critical points.

\begin{thm}
Let $G:\UN\rightarrow \R$ be a smooth cost function and $S\in \SUN$. Then:
\begin{enumerate}[(i)]
\item the gradient vector field on $\SUN$ of the restricted cost function $G_{|{\SUN}}$ expressed as a vector field in the ambient space $\UN$ is given by

 \begin{equation}\label{gradient-SUN-linear}
 \nabla_{\SUN}\left(G_{|{\SUN}}\right)(S)=\nabla_{\UN}G(S)-\frac{1}{N}\tr\left(S^{\dg}\nabla_{\UN}G(S)\right)S.
 \end{equation}
 
 \item the matrix $S\in \SUN$ is a critical point for $G_{|{\SUN}}$ if and only if there exists $\alpha\in \mathbb{C}$ such that
 \begin{equation*}
 \nabla_{\UN}G(S)=\alpha S.
 \end{equation*}
 \end{enumerate}
 \end{thm}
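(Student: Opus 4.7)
The plan is to simply substitute the three quantities computed in the paragraphs immediately preceding the statement into the general embedded gradient formula \eqref{embedded} specialized to the single-constraint case $k=1$, $F_1=F_{h_W}$, and then read off the critical point condition by setting the resulting vector to zero.

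For part (i), the formula \eqref{embedded} applied to $\SUN=F_{h_W}^{-1}(0)\subset\UN$ reads
\begin{equation*}
\nabla_{\SUN}(G_{|\SUN})(S)=\nabla_{\UN}G(S)-\frac{\langle\nabla_{\UN}G(S),\nabla_{\UN}F_{h_W}(S)\rangle_{\UN}}{\langle\nabla_{\UN}F_{h_W}(S),\nabla_{\UN}F_{h_W}(S)\rangle_{\UN}}\,\nabla_{\UN}F_{h_W}(S).
\end{equation*}
All three ingredients are already in hand: $\nabla_{\UN}F_{h_W}(S)=-\imath S$, the squared norm $\langle\nabla_{\UN}F_{h_W}(S),\nabla_{\UN}F_{h_W}(S)\rangle_{\UN}=N/2$, and the mixed pairing $\langle\nabla_{\UN}G(S),\nabla_{\UN}F_{h_W}(S)\rangle_{\UN}=\frac{\imath}{2}\tr(S^{\dagger}\nabla_{\UN}G(S))$. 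Plugging these in, the factor $\imath\cdot(-\imath)=1$ collapses the Lagrange-multiplier term to $\tfrac{1}{N}\tr(S^{\dagger}\nabla_{\UN}G(S))\,S$, giving exactly \eqref{gradient-SUN-linear}.

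For part (ii), $S\in\SUN$ is a critical point of $G_{|\SUN}$ iff the left-hand side of \eqref{gradient-SUN-linear} vanishes, i.e. iff
\begin{equation*}
\nabla_{\UN}G(S)=\tfrac{1}{N}\tr(S^{\dagger}\nabla_{\UN}G(S))\,S,
\end{equation*}
which is precisely the form $\nabla_{\UN}G(S)=\alpha S$ with $\alpha:=\tfrac{1}{N}\tr(S^{\dagger}\nabla_{\UN}G(S))\in\mathbb{C}$. Conversely, if $\nabla_{\UN}G(S)=\alpha S$ for some $\alpha\in\mathbb{C}$, then multiplying by $S^{\dagger}$ and taking the trace yields $\tr(S^{\dagger}\nabla_{\UN}G(S))=\alpha N$, so the subtracted term in \eqref{gradient-SUN-linear} equals $\alpha S=\nabla_{\UN}G(S)$ and the gradient vanishes.

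There is no genuine obstacle here: the theorem is essentially a bookkeeping statement that collects the explicit computations carried out in the preceding pages, and the only arithmetic subtlety is keeping track of the two factors of $\imath$ (one from $\nabla_{\UN}F_{h_W}=-\imath S$ and one from $\mathrm{d}_S F_{h_W}=-\tfrac{\imath}{2}\tr\Omega$) that combine to yield the real coefficient $1/N$. One may additionally note, though it is not needed for the statement, that tangency of $\nabla_{\UN}G(S)$ to $\UN$ forces the scalar $\alpha$ in (ii) to be purely imaginary, consistent with $\tr(S^{\dagger}\nabla_{\UN}G(S))\in\imath\R$.
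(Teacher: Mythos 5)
Your proposal is correct and follows exactly the route the paper takes: the theorem is stated as a synthesis of the computations immediately preceding it, namely substituting $\nabla_{\UN}F_{h_W}(S)=-\imath S$, the norm $N/2$, and the pairing $\frac{\imath}{2}\tr\left(S^{\dg}\nabla_{\UN}G(S)\right)$ into the single-constraint embedded gradient formula, with part (ii) read off by setting the result to zero. Your closing observation that $\alpha$ is necessarily purely imaginary is a correct (and not needed) bonus consistent with $\nabla_{\UN}G(S)=\Omega S$ for skew-Hermitian $\Omega$.
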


The gradient vector field $\nabla_{\UN}G$ on the unitary group $\UN$ can be further written as a vector field in the ambient space $\mathcal{M}_N(\mathbb{C})$, see \cite{abrudan} and \cite{manton}. For the cases of $\mathcal{SU}(2)$ and $\mathcal{SU}(3)$, formulas for the gradient vector fields are given in \cite{fiori-su(2)} and \cite{fiori-su(3)}. 

\subsection{Hessian operator on $\SUN$}

In the case of $\SUN$ viewed as a constraint submanifold of $\UN$, the formula \eqref{Hessian-general} for the Hessian matrix becomes
\begin{equation*}
\text{Hess}_{\SUN}\, \left(G_{|{\SUN}}\right)(S)=\left(\text{Hess}_{\UN}\,G(S)-\sigma(S)\text{Hess}_{\UN}\, F_{h_{W}}(S)\right)_{|T_S \SUN\times T_S \SUN},
\end{equation*}
where $\sigma(S)=\dfrac{1}{N}\tr\left(S^{\dg}\nabla_{\UN}G(S)\right)$.
The tangent space to $\SUN$ in a point $S\in \SUN$ is the set 
$$T_S\SUN=\{\Omega S\,|\,\Omega+\Omega^{\dg}=\mathbb{O}_N,\,\tr(\Omega)=0\}.$$

In order to compute $\text{Hess}_{\UN}\, F_{h_{W}}$, we consider the geodesic curve $\gamma:\R\rightarrow \UN$ given by $\gamma(t)= \exp(t\Omega)U$, which passes trough $U\in \UN$ at $t=0$ with the speed $\Omega$ in the Lie algebra $\mathfrak{u}(N)$ of $\UN$.

Using \eqref{diferentiala-UN-1212}, we have\footnote{$\det(\exp X)=\exp(\tr X)$ for $X\in \mathcal{M}_N(\mathbb{C})$.}
\begin{align*}
\left(F_{h_{W}}\circ\gamma\right)'(t) & =\text{d}_{\gamma(t)}F_{h_{W}}(\gamma'(t))=\text{d}_{\gamma(t)}F_{h_{W}}(\Omega\exp(t\Omega) U)\nonumber\\
& = -2\imath\frac{\det (\exp(t\Omega)U)\tr\Omega}{(\det(\exp(t\Omega)U)+1)^2} = -2\imath\frac{\det (\exp(t\Omega))\det U\tr\Omega}{(\det(\exp(t\Omega))\det U+1)^2}\nonumber \\
&  = -2\imath\frac{\exp(t\tr \Omega) \det U\tr\Omega}{(\exp(t\tr\Omega)\det U+1)^2}
\end{align*}
and 
\begin{equation*}
\left(F_{h_{W}}\circ\gamma\right)''(t)=2\imath\frac{\det U(\tr \Omega)^2\exp(t\tr\Omega)\left(\det U\exp(t\tr\Omega)-1\right)}{\left(\det U\exp(t\tr\Omega)+1\right)^3}.
\end{equation*}
We obtain the quadratic form associated with the Hessian matrix in a point $U\in \UN$ 
\begin{equation*}
\text{Hess}_{\UN}\, F_{h_{W}}(U)(\Omega U,\Omega U) = \left(F_{h_{W}}\circ\gamma\right)''(0)=2\imath\frac{\det U(\tr \Omega)^2\left(\det U-1\right)}{\left(\det U+1\right)^3},\,\,\text{for}\,\,\Omega\in \mathfrak{u}(N).
\end{equation*}
By the polarization law\footnote{For a bilinear form $B:\mathcal{V}\times \mathcal{V}\rightarrow \R$ we have $B(u,v)=\dfrac{1}{2}\left(B(u+v,u+v)-B(u,u)-B(v,v)\right)$, for all $u,v\in \mathcal{V}$.}, 
we can compute  $\text{Hess}_{\UN}\, F_{h_{W}}(U)(\Omega_1 U,\Omega_2 U) $, for $\Omega_1,\Omega_2\in \mathfrak{u}(N)$.

If we take $S\in \SUN$ in the above formula, then $\text{Hess}_{\UN}\, F_{h_{W}}(S)(\Omega S,\Omega S) = 0$, for $\Omega\in \mathfrak{u}(N)$ and, consequently,  $\text{Hess}_{\UN}\, F_{h_{W}} (S)(\Omega_1 S,\Omega_2 S)=0$, for $\Omega_1,\Omega_2\in \mathfrak{u}(N)$.

It follows that
\begin{equation}\label{Helmke-formula11}
\text{Hess}_{\SUN}\, \left(G_{|{\SUN}}\right)(S)=\left(\text{Hess}_{\UN}\,G(S)\right)_{|T_S \SUN\times T_S \SUN},
\end{equation}
which is a particular case of the more general result from Proposition 3.3 in \cite{helmke}.

\section{The trace fidelity problem on $\SUN$}

As previewed in Introduction, a cost function of interest in quantum control landscape theory is the 
so-called trace fidelity cost function. As multiplication with a constant factor does not affect the set of critical points and their nature, we will drop the factor $\dfrac{1}{N}$ from the definition of the trace fidelity function and we work in what follows with the cost function
 $\widetilde{G}:\SUN \rightarrow \R$ given by 
$$\widetilde{G}(S)=\text{Re}\tr(A^{\dg}S),$$
where $A\in \UN$ is a given target matrix. Our objective is to determine its critical points and to study their nature. In the quantum control context these critical points are called kinematic critical points.

By a straightforward computation we notice that this cost function comes from a least square problem, since we have\footnote{We denote by $\|X\|_F=\sqrt{\tr\left(X^{\dg}X\right)}$ the Frobenius norm of the matrix $X$ on $\mathcal{M}_N(\mathbb{C})$.} 
\begin{align}\label{least-square-Frobenius}
\|A-S\|_F^2 & = \tr\left((A^{\dg}-S^{\dg})(A-S)\right)=\tr\left(A^{\dg}A\right)+N-\tr\left(A^{\dg}S+(A^{\dg}S)^{\dg}\right) \nonumber \\
& = 2N-2\text{Re}\tr(A^{\dg}S).
\end{align}
Consequently, the critical points of $\widetilde{G}$ coincide with the critical points of the least square cost function; maxima of $\widetilde{G}$ are minima of the least square cost function, and vice versa.

The trace fidelity cost function $\widetilde{G}$ has the natural extension $G:\UN\rightarrow \R$, $G(U)= \text{Re}\tr(A^{\dg}U)$. In order to obtain $\nabla _{\UN}G$, for $U\in \UN$ and $\Omega$ a skew-Hermitian matrix,  we have the following computations:
\begin{align*}
\left<\nabla_{\UN}G(U),\Omega U\right>_{\UN} & = \text{d}_U G(\Omega U) = \text{Re}\tr\left(\text{d}(A^{\dg}U)\Omega U\right) \\
& = \text{Re}\tr\left(A^{\dg}\Omega U\right) =\text{Re}\tr\left(U A^{\dg}\Omega \right)  \\
& = \text{Re}\tr\left(\left(U A^{\dg}\right)_{S}\Omega \right) +\text{Re}\tr\left(\left(U A^{\dg}\right)_{H}\Omega \right) \\
& = \text{Re}\tr\left(\left(U A^{\dg}\right)_{S}\Omega UU^{\dg}\right) = \text{Re}\tr\left(U^{\dg}\left(U A^{\dg}\right)_{S}\Omega U\right) \\
& = 2 \left<\left(\left(U A^{\dg}\right)_{S}\right)^{\dg}U,\Omega U\right>_{\UN} \\
& = \left<\left( AU^{\dg}-UA^{\dg} \right)U,\Omega U\right>_{\UN},
\end{align*}
where we have denoted by $X_S$, $X_H$ the skew-Hermitian and the Hermitian part of the matrix $X$.

It follows that 
$$\nabla_{\UN}G(U) =\left( AU^{\dg}-UA^{\dg} \right)U.$$
Using \eqref{gradient-SUN-linear}, we obtain the expression of the gradient vector field for the trace fidelity cost function
\begin{equation*}
\nabla_{\SUN}\widetilde{G}(S) = \left( AS^{\dg}-SA^{\dg} -\frac{1}{N}\tr\left(AS^{\dg}-SA^{\dg}\right)\mathbb{I}_N    \right)S.
\end{equation*}
Consequently, the critical points of $\widetilde{G}$ are the solutions of the matrix equation
\begin{equation*}
AS^{\dg}-SA^{\dg} =\frac{1}{N}\tr\left(AS^{\dg}-SA^{\dg}\right)\mathbb{I}_N.
\end{equation*}

We notice that the following set equality holds:
$$\left\{X\in \mathcal{M}_N(\mathbb{C})\,\left|\,X=\frac{\tr X}{N}\mathbb{I}_N,\,X\,\,\text{is skew-Hermitian}\right.\right\}=
\left\{\mu\imath \mathbb{I}_N\,\left|\,\mu\in \R\right.\right\}.$$

Synthesizing the above considerations, we obtain the following result.

\begin{prop}\label{critical-condition-fidelity}
The critical points  of the trace fidelity cost function $\widetilde{G}$ are the solutions $S\in \SUN$ of the matrix equations
$$AS^{\dg}-SA^{\dg} =\mu\imath  \mathbb{I}_N,\,\,\,\mu\in \R.$$
\end{prop}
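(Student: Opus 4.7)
The plan is to combine the embedded gradient computation already carried out in the excerpt with the two algebraic observations that the authors have explicitly flagged just above the proposition: the skew-Hermitian character of the matrix $AS^{\dg}-SA^{\dg}$ and the displayed set equality describing which scalar multiples of $\mathbb{I}_N$ are skew-Hermitian. Once these are in place, the target matrix equation follows by a direct identification of the Lagrange multiplier with a purely imaginary scalar.

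First, I would set $\nabla_{\SUN}\widetilde{G}(S)=0$ in the formula already derived,
$$\nabla_{\SUN}\widetilde{G}(S)=\left(AS^{\dg}-SA^{\dg}-\tfrac{1}{N}\tr(AS^{\dg}-SA^{\dg})\mathbb{I}_N\right)S.$$
Because $S\in\SUN$ is unitary, hence invertible, cancelling $S$ on the right yields the equivalent condition
$$AS^{\dg}-SA^{\dg}=\tfrac{1}{N}\tr\!\left(AS^{\dg}-SA^{\dg}\right)\mathbb{I}_N. \qquad (\ast)$$
Next, I would verify that $X:=AS^{\dg}-SA^{\dg}$ is skew-Hermitian: since $(AS^{\dg})^{\dg}=SA^{\dg}$, we obtain $X^{\dg}=SA^{\dg}-AS^{\dg}=-X$.

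At this point the proposition reduces to the set equality displayed immediately before its statement. The matrix $X$ is simultaneously skew-Hermitian and equal to $\tfrac{\tr X}{N}\mathbb{I}_N$, so that equality forces $X=\mu\imath\mathbb{I}_N$ for some $\mu\in\R$. Conversely, if $S\in\SUN$ satisfies $AS^{\dg}-SA^{\dg}=\mu\imath\mathbb{I}_N$ for some real $\mu$, then taking the trace gives $\tr(AS^{\dg}-SA^{\dg})=\mu\imath N$, so $(\ast)$ is satisfied and the gradient vanishes, proving the reverse implication.

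There is essentially no obstacle in this argument; the heavy lifting has already been completed in the explicit derivation of $\nabla_{\UN}G$ and of the embedded gradient on $\SUN$. The only delicate point worth spelling out is why the scalar $\tfrac{1}{N}\tr(AS^{\dg}-SA^{\dg})$ must be purely imaginary, which is simply because the diagonal entries of any skew-Hermitian matrix are purely imaginary, whence its trace is too. That bookkeeping step is what converts the Lagrange multiplier appearing in the embedded gradient into the real parameter $\mu$ of the proposition.
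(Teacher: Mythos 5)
Your argument is correct and follows exactly the paper's route: the paper also sets the displayed embedded gradient to zero, cancels the invertible factor $S$, and invokes the set equality for skew-Hermitian scalar multiples of $\mathbb{I}_N$ stated immediately before the proposition. Your explicit check of the converse direction and of why the trace is purely imaginary only makes the same argument slightly more detailed.
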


\subsection{Kinematic critical points when the target matrix $A$ is from $\SUN$}

From \eqref{least-square-Frobenius} it follows that the least square cost problem has the matrix $A$ as the unique global minimum and, consequently, {$A$ is the unique global maximum for the trace fidelity cost function $\widetilde{G}$.}

We can reduce the study of $\widetilde{G}$, with an arbitrary $A\in \SUN$, to the particular case of the cost function $\widehat{G}(S)=\text{Re}\tr(S)$. We have the obvious equality $\widetilde{G}(AS)=\widehat{G}(S)$ for any $S\in \SUN$. 
{\bf A matrix $S\in \SUN$ is a critical point for $\widehat{G}$ if and only if the matrix $AS\in \SUN$ is a critical point for $\widetilde{G}$. Moreover, if $S$ is a local minimum, a local maximum, a saddle point for the reduced cost function $\widehat{G}$, then $AS$ is a local minimum, a local maximum, a saddle point, respectively, for the cost function $\widetilde{G}$.}\bigskip

Using Proposition \ref{critical-condition-fidelity}, the equations that describe the critical points of the reduced cost function $\widehat{G}$ are
\begin{equation}\label{critice-I-N}
\begin{cases}
S^{\dg}-S =\mu\imath  \mathbb{I}_N,\,\,\,\mu\in \R \\
S^{\dg}S=SS^{\dg}=\mathbb{I}_N \\
\det(S)=1.
\end{cases}
\end{equation}
Let $S$ be a critical point of the reduced cost function $\widehat{G}$, i.e. a matrix which satisfies \eqref{critice-I-N}. We make the following substitution 
$$W=S+\frac{1}{2}\mu \imath \mathbb{I}_N.$$
The condition $S^{\dg}-S =\mu\imath  \mathbb{I}_N$ becomes $W^{\dg}=W$, i.e. $W$ has to be a Hermitian matrix.
By the spectral decomposition theorem for Hermitian matrices, there exist a unitary matrix $V$ and a diagonal matrix $D=\text{diag}(\lambda_1,\lambda_2,\dots,\lambda_N)$ such that $W=V^{\dg}DV$, where $\lambda_1\geq \lambda_2\geq \dots\geq \lambda_N$ are the real eigenvalues of the Hermitian matrix $W$.

Substituting $S=V^{\dg}DV-\dfrac{1}{2}\mu \imath \mathbb{I}_N$ in the second condition from \eqref{critice-I-N} we obtain 
$$D^2=\left(1-\frac{\mu^2}{4}\right)\mathbb{I}_N.$$
From the above condition it follows that we must have $\mu\in [-2,2]$. For every $k\in \{1,\dots,N\}$, we have $\lambda_k^2=1-\dfrac{\mu^2}{4}$, therefore $\lambda_k=\sqrt{1-\dfrac{\mu^2}{4}}\,\varepsilon_k$, $\varepsilon_k\in\{-1,1\}$.

Imposing the last condition from  \eqref{critice-I-N}, we obtain 
$$\det\left(D-\frac{1}{2}\mu\imath \mathbb{I}_N\right)=1,$$
which is equivalent with 
\begin{equation}\label{produs-b-b}
\prod_{k=1}^N\left(\sqrt{1-\frac{\mu^2}{4}}\,\varepsilon_k-\frac{1}{2}\mu\imath\right)=1.
\end{equation}
We introduce the sets $\mathcal{K}_+=\{k\in\{1,2,\dots,N\}\,|\,\varepsilon_k=1\}$ and its complement $\mathcal{K}_-=\{k\in\{1,2,\dots,N\}\,|$ $\varepsilon_k=-1\}$. The diagonal matrix $D$ is given by
$$D=\sqrt{1-\frac{\mu^2}{4}}\,\text{diag}(\underbrace{1,\dots,1}_{|\mathcal{K}_+|\rm\ times},\underbrace{-1,\dots,-1}_{N-|\mathcal{K}_+|\rm\ times}),$$
where $|\mathcal{K}_+|$ is the cardinal of the set $\mathcal{K}_+$.
The equation \eqref{produs-b-b} can be written
$$(-1)^{N-|\mathcal{K}_+|}\left(\sqrt{1-\frac{\mu^2}{4}}-\frac{1}{2}\mu\imath\right)^{|\mathcal{K}_+|}\left(\sqrt{1-\frac{\mu^2}{4}}+\frac{1}{2}\mu\imath\right)^{N-|\mathcal{K}_+|}=1.$$
 Making the notation $z_{\mu}:=\sqrt{1-\dfrac{\mu^2}{4}}+\dfrac{1}{2}\mu\imath$ and taking into account that $|z_{\mu}|=1$, we obtain that $\dfrac{1}{z_{\mu}}=\sqrt{1-\dfrac{\mu^2}{4}}-\dfrac{1}{2}\mu\imath$. 

Consequently, the equation \eqref{produs-b-b} becomes
\begin{equation*}
z_{\mu}^{N-2|\mathcal{K}_+|}=(-1)^{N-|\mathcal{K}_+|},\,\,\,\text{Re}(z_{\mu})\geq 0.
\end{equation*}

We obtain that the critical point $S$ is given by $S=V^{\dg}\left(D-\frac{1}{2}\mu\imath \mathbb{I}_N\right)V$, with $V\in \UN$ coming from the spectral decomposition theorem and $D$ the diagonal matrix determined above.

Conversely, by direct computations, we can prove that the matrices $S={U}^{\dg}\left(D-\frac{1}{2}\mu\imath \mathbb{I}_N\right){U}$, with $U\in \UN$ an arbitrary unitary matrix and $D$ the diagonal matrix determined above, are critical points for the reduced cost function $\widehat{G}$.\medskip

The algorithm for solving the problem \eqref{critice-I-N} of finding the critical points for $\widehat{G}$ has the following steps:
\begin{enumerate} [(1)]
\item Run the value of $|\mathcal{K}_+|$ through the set $\{0,1,\dots,N\}$.

\item Solve the equation $z^{N-2|\mathcal{K}_+|}=(-1)^{N-|\mathcal{K}_+|}$ and select the solutions with $\text{Re}(z)\geq 0$ and $|z|=1$.

\item For every solution $z$ from step (2) take $\mu=2\text{Im}(z)$.

\item  Construct the diagonal matrix 
$$D_{|\mathcal{K}_+|,\mu}=\sqrt{1-\frac{\mu^2}{4}}\,\text{diag}(\underbrace{1,\dots,1}_{|\mathcal{K}_+|\rm\ times},\underbrace{-1,\dots,-1}_{N-|\mathcal{K}_+|\rm\ times}).$$

\item Construct the critical points of the cost function $\widehat{G}$ 
$$S_{|\mathcal{K}_+|,\mu}(U)=U^{\dg}\left(D_{|\mathcal{K}_+|,\mu}-\frac{1}{2}\mu\imath \mathbb{I}_N\right)U,$$
where $U\in \UN$ is an arbitrary unitary matrix.

\end{enumerate}

The value of the reduced cost function $\widehat{G}$ in the critical point $S_{|\mathcal{K}_+|,\mu}(U)$ is 
$$\widehat{G}\left(S_{|\mathcal{K}_+|,\mu}(U)\right)=\sqrt{1-\frac{\mu^2}{4}}\cdot (2|\mathcal{K}_+|-N).$$


\subsection{The nature of the kinematic critical points  when the target matrix $A$ is from $\SUN$}

In order to compute $\text{Hess}_{\UN}\, G$ we consider the geodesic curve $\gamma:\R\rightarrow \UN$ given by $\gamma(t)= \exp(t\Omega)U$, which passes trough $U\in \UN$ at $t=0$ with the speed $\Omega$ in the Lie algebra $\mathfrak{u}(N)$ of $\UN$ (which is formed with the skew-Hermitian matrices from $\mathcal{M}_N(\mathbb{C})$).

Since we have $G(U)=\text{Re}\tr(A^{\dg}U)=\frac{1}{2}(\tr(A^{\dg}U)+\tr(U^{\dg}A))$, we write successively
\begin{align*}
&(G\circ\gamma)(t)=\frac{1}{2}\left(\tr(A^{\dg}\exp(t\Omega)U)+\tr(U^{\dg}\exp(t\Omega^{\dg})A)\right);\\
&(G\circ\gamma)'(t)=\frac{1}{2}\left(\tr(A^{\dg}\exp(t\Omega)\Omega U)+\tr(U^{\dg}\exp(t\Omega^{\dg})\Omega^{\dg}A)\right);\\
&(G\circ\gamma)''(t)=\frac{1}{2}\left(\tr(A^{\dg}\exp(t\Omega)\Omega^2 U)+\tr(U^{\dg}\exp(t\Omega^{\dg})(\Omega^{\dg})^2A)\right).
\end{align*}
We obtain the quadratic form associated with the Hessian matrix in a point $U\in\UN$
\begin{equation*}
\text{Hess}_{\UN}\, G(U)(\Omega U,\Omega U) = \left(G\circ\gamma\right)''(0)=\frac{1}{2}\left(\tr(A^{\dg}\Omega^2 U)+\tr(U^{\dg}(\Omega^{\dg})^2A)\right)=\frac{1}{2}\tr(\Omega^2(UA^{\dg}+AU^{\dg})).
\end{equation*}
By the polarization law, we compute  
$$\text{Hess}_{\UN}\, G(U)(\Omega_1 U,\Omega_2 U)= \frac{1}{4}\tr((\Omega_1\Omega_2+\Omega_2\Omega_1)(UA^{\dg}+AU^{\dg})),\, \text{for}\, \Omega_1,\Omega_2\in \mathfrak{u}(N).$$
Applying the formula \eqref{Helmke-formula11}, we obtain the Hessian for the cost function $\widetilde{G}$ in a point $S\in \SUN$
$$\text{Hess}_{\SUN}\, \widetilde{G}(S)(\Omega_1 S,\Omega_2 S)= \frac{1}{4}\tr((\Omega_1\Omega_2+\Omega_2\Omega_1)(SA^{\dg}+AS^{\dg})),\, \text{for}\, \Omega_1,\Omega_2\in \mathfrak{su}(N),$$
where the Lie algebra $\mathfrak{su}(N)$ is the set
$$\mathfrak{su}(N)=\{\Omega\in \mathcal{M}_N(\mathbb{C})\,|\,\Omega+\Omega^{\dg}=\mathbb{O}_N,\,\tr(\Omega)=0\}.$$
The Hessian for the reduced cost function $\widehat{G}$ in a point $S\in \SUN$ is given by
\begin{equation*}
\text{Hess}_{\SUN}\, \widehat{G}(S)(\Omega_1 S,\Omega_2 S)= \frac{1}{4}\tr((\Omega_1\Omega_2+\Omega_2\Omega_1)(S+S^{\dg})),\, \text{for}\, \Omega_1,\Omega_2\in \mathfrak{su}(N)
\end{equation*}
and the quadratic form associated with the Hessian matrix in a point $S\in \SUN$ is
\begin{equation*}
\text{Hess}_{\SUN}\, \widehat{G}(S)(\Omega S,\Omega S) =\frac{1}{2}\tr(\Omega^2(S+S^{\dg})),\,\text{for}\, \Omega\in \mathfrak{su}(N).
\end{equation*}

For a kinematic critical point which has the formula $S_{|\mathcal{K}_+|,\mu}(U)=U^{\dg}\left(D_{|\mathcal{K}_+|,\mu}-\frac{1}{2}\mu\imath \mathbb{I}_N\right)U$, the quadratic form associated with the Hessian matrix in the point $S_{|\mathcal{K}_+|,\mu}(U)$ becomes
\begin{align*}
\text{Hess}_{\SUN}\, \widehat{G}(S)(\Omega S,\Omega S)  & =\tr\left(\Omega^2\cdot\frac{1}{2}(U^{\dg}(D_{|\mathcal{K}_+|,\mu}-\mu\imath\mathbb{I}_N)U+U^{\dg}(D_{|\mathcal{K}_+|,\mu}+\mu\imath\mathbb{I}_N)U)\right)\\
& = \tr\left(\Omega^2U^{\dg}D_{|\mathcal{K}_+|,\mu}U\right) = \tr\left(U\Omega U^{\dg}U\Omega U^{\dg}D_{|\mathcal{K}_+|,\mu}\right) \\
& = \tr\left(\widetilde{\Omega}^2D_{|\mathcal{K}_+|,\mu}\right),
\end{align*}
where we have denoted $\widetilde{\Omega}=U\Omega U^{\dg}\in \mathfrak{su}(N)$.

The following result presents the nature of the critical points for the reduced cost function $\widehat{G}$ (and consequently for the trace fidelity cost function $\widetilde{G}$).

\begin{thm} The exhaustive list of critical points for $\widehat{G}$ and their nature is as follows.

\begin{enumerate}[(i)]
\item For $0<|\mathcal{K}_+|<N$, the critical points $S_{|\mathcal{K}_+|,\mu}(U)$ are all saddle points. 

\item For $|\mathcal{K}_+|=0$, the critical points are $S_{0,\mu}(U)=-\left(\sqrt{1-\frac{\mu^2}{4}}+\frac{1}{2}\mu\imath \right)\mathbb{I}_N$, where $\mu=2\emph{Im}(z)$, with $z$ being the solutions of the equation $z^N=(-1)^N$ verifying $\emph{Re}(z)\geq 0$. 

\begin{enumerate}[(a)]
\item In the case $N=2P$, $P\in \mathbb{N}$, the critical point $S_{0,0}=-\mathbb{I}_N$ is the only global minimum. In addition, when $N$ is a multiple of 4 the critical points $S_{0,\pm 2}=\mp \imath\mathbb{I}_N$ are saddle points.
 All the other critical points $S_{0,\mu}$, which exist for $P\geq 3$, {\bf are local minima which are not global minima}.

\item In the case $N=2P+1$, $P\in \mathbb{N}$, the critical points $S_{0,\mu}$, where $\mu=2\text{Im}(z)$ for $z=-\cos\left(\frac{2P\pi}{2P+1}\right)\pm \sin\left(\frac{2P\pi}{2P+1}\right)\imath$, are the global minima. All the other critical points $S_{0,\mu}$, which exist for $P\geq 3$, {\bf are local minima which are not global minima}.

\end{enumerate}
\item For $|\mathcal{K}_+|=N$, the critical points are $S_{N,\mu}(U)=\left(\sqrt{1-\frac{\mu^2}{4}}-\frac{1}{2}\mu\imath \right)\mathbb{I}_N$, where $\mu=2\emph{Im}(z)$, with $z$ being the solutions of the equation $z^N=1$ verifying $\emph{Re}(z)\geq 0$. The critical point $S_{N,0}=\mathbb{I}_N$ is the only global maximum. All the other critical points, which exist for $N\geq 5$, {\bf are local maxima which are not global maxima}.

\end{enumerate}
\end{thm}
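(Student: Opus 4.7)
The plan is to exploit the Hessian formula derived just before the theorem,
$$\text{Hess}_{\SUN}\,\widehat{G}(S_{|\mathcal{K}_+|,\mu}(U))(\Omega S,\Omega S)=\tr\!\left(\widetilde{\Omega}^2 D_{|\mathcal{K}_+|,\mu}\right),$$
with $\widetilde{\Omega}=U\Omega U^{\dg}\in\mathfrak{su}(N)$ and $D_{|\mathcal{K}_+|,\mu}=\alpha\,\text{diag}(\epsilon_1,\dots,\epsilon_N)$, where $\alpha=\sqrt{1-\mu^2/4}$ and $\epsilon_i=\pm 1$ according as $i\in\mathcal{K}_\pm$. Writing $\widetilde{\Omega}=(\omega_{ij})$ and using $\omega_{ji}=-\overline{\omega_{ij}}$ to get $(\widetilde{\Omega}^2)_{ii}=-\sum_j|\omega_{ij}|^2$, the quadratic form collapses to the single identity
$$\text{Hess}_{\SUN}\,\widehat{G}(\Omega S,\Omega S)=-\alpha\sum_{i}\epsilon_i|\omega_{ii}|^2-\alpha\sum_{i<j}(\epsilon_i+\epsilon_j)|\omega_{ij}|^2,$$
which governs all three items; throughout I use that the value at a critical point is $\widehat{G}(S_{|\mathcal{K}_+|,\mu}(U))=\alpha(2|\mathcal{K}_+|-N)$.

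For item (iii), $|\mathcal{K}_+|=N$ forces $\epsilon_i\equiv 1$, and the Hessian becomes $-\alpha\|\widetilde{\Omega}\|_F^2$, negative definite whenever $\alpha>0$, so each $S_{N,\mu}$ with $\alpha>0$ is a local maximum. Uniqueness of $S_{N,0}=\mathbb{I}_N$ as global maximum follows from $\text{Re}\,\tr(S)\leq N$ with equality only at $\mathbb{I}_N$; and since the non-real solutions of $z^N=1$ with $\text{Re}(z)\geq 0$ appear exactly when $N\geq 5$, the "other" local-but-not-global maxima exist precisely in that regime. Item (ii) is handled dually: $\epsilon_i\equiv -1$ gives Hessian $+\alpha\|\widetilde{\Omega}\|_F^2$, hence local minima; ranking the values $-N\alpha$ reduces global minimality to selecting the solution of $z^N=(-1)^N$ in $\{\text{Re}(z)\geq 0\}$ with the largest real part, namely $z=1$ for $N$ even (giving $S=-\mathbb{I}_N$, value $-N$) and $z=e^{\pm\imath\pi/N}$ for $N$ odd (value $-N\cos(\pi/N)$). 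The remaining admissible $\mu$ then produce local-but-not-global minima, and a direct count shows these are present exactly when $P\geq 3$.

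For item (i), $0<|\mathcal{K}_+|<N$, the Hessian must be shown indefinite. If $|\mathcal{K}_+|\geq 2$, a single off-diagonal entry $\omega_{ij}$ with $i,j\in\mathcal{K}_+$ makes $\epsilon_i+\epsilon_j=2$ and yields Hessian $=-2\alpha|\omega_{ij}|^2<0$; symmetrically $|\mathcal{K}_-|\geq 2$ produces a positive direction. The boundary subcase $|\mathcal{K}_+|=1$ (or $N-1$), which requires $N\geq 3$ to exist at all, is covered on the missing side by the diagonal traceless choice $\widetilde{\Omega}=\imath\,\text{diag}(-(N-1),1,\dots,1)$, giving Hessian $=-\alpha(N-1)(N-2)<0$. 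Hence every critical point of item (i) with $\alpha>0$ is a saddle.

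The only remaining case is the degenerate $\alpha=0$, which forces $\mu=\pm 2$ and $S=\mp\imath\mathbb{I}_N$; by the parity constraint $z_\mu^{N-2|\mathcal{K}_+|}=(-1)^{N-|\mathcal{K}_+|}$ this occurs only when $4\mid N$. Because $S+S^{\dg}=0$ the second-order form vanishes identically, so a third-order test is needed. The plan is to restrict $\widehat{G}$ to the curve $t\mapsto\pm\imath\exp(t\Omega)$ with diagonal $\Omega=\imath\,\text{diag}(\lambda_1,\dots,\lambda_N)$, $\sum_k\lambda_k=0$: the cost becomes $\mp\sum_k\sin(t\lambda_k)$, whose Taylor expansion has leading term cubic in $t$ and proportional to $\sum_k\lambda_k^3$; the choice $(2,-1,-1,0,\dots,0)$ makes this nonzero, exhibiting a sign change across $t=0$ and confirming the saddle assertion in item (ii)(a). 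The main technical obstacle is the careful bookkeeping to reconcile the ambiguous parametrization (distinct triples $(|\mathcal{K}_+|,\mu,U)$ can produce the very same scalar matrix when $\alpha=0$) with the three-part classification, and then to certify via the explicit cost value $\alpha(2|\mathcal{K}_+|-N)$ that no sub-global extremum has been overlooked in the comparison across all admissible triples.
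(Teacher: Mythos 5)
Your proof is correct, and its backbone coincides with the paper's: the quadratic form $\tr(\widetilde{\Omega}^2 D_{|\mathcal{K}_+|,\mu})$, its definiteness for $|\mathcal{K}_+|\in\{0,N\}$ and indefiniteness for $0<|\mathcal{K}_+|<N$ when $\alpha=\sqrt{1-\mu^2/4}>0$, and the comparison of the critical values $\alpha(2|\mathcal{K}_+|-N)$ to separate global from merely local extrema. You diverge in two places, both legitimately. First, for indefiniteness in item (i) the paper perturbs only along diagonal traceless directions and observes that the resulting expression is a non-constant affine function of the single entry $d_{|\mathcal{K}_+|+1}$, hence takes both signs; you instead read off explicit positive and negative directions from the expansion $-\alpha\sum_i\epsilon_i|\omega_{ii}|^2-\alpha\sum_{i<j}(\epsilon_i+\epsilon_j)|\omega_{ij}|^2$ using off-diagonal test matrices, patching the boundary subcases $|\mathcal{K}_+|\in\{1,N-1\}$ with a diagonal one --- same spirit, different test vectors. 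Second, and more substantively, for the degenerate points $\mu=\pm2$ (where $S=\mp\imath\mathbb{I}_N$, possible only when $4\mid N$, and the Hessian vanishes identically) the paper invokes a separate Appendix lemma: these points are the endpoints of the continuous family $S_{2M,\mu}(U)$, $\mu\in[-2,2]$, along which $\widehat{G}\equiv 0$ and whose interior points are already known to be saddles, so the endpoints are saddles by continuity. Your third-order test along $t\mapsto\exp(t\Omega)S$ with $\Omega=\imath\,\mathrm{diag}(2,-1,-1,0,\dots,0)$, whose leading Taylor term is cubic and proportional to $\sum_k\lambda_k^3\neq 0$, is a self-contained local replacement that dispenses with the lemma entirely; what it costs you is the bookkeeping you already flag, namely that all triples $(|\mathcal{K}_+|,\pm2,U)$ collapse to the same two scalar matrices. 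Two small points to tighten: the claim that the subcase $|\mathcal{K}_+|=1$ "requires $N\geq 3$" should be backed by noting that for $N=2$ the defining equation becomes $z^{0}=(-1)^{1}$, which has no solution, so that subcase contributes no critical points; and in item (iii) non-real solutions of $z^N=1$ with $\mathrm{Re}(z)\geq 0$ do occur already for $N=4$ (namely $z=\pm\imath$), but those are exactly the degenerate saddles you treat separately, so your count of genuinely new local-but-not-global maxima beginning at $N=5$ still stands.
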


\begin{proof}

$(i)$ 
It is easy to notice that we have a critical point with $\mu=\pm 2$ if and only if the complex numbers $\imath$ or $-\imath$ are solutions for the equation $z^{N-2|\mathcal{K}_+|}=(-1)^{N-|\mathcal{K}_+|}$ (see the second step of the algorithm for determining the critical points), which is equivalent with $N$ being a multiple of 4.\medskip

{\bf I.} If $N$ is not a multiple of 4, then $\mu\neq \pm 2$.  A critical point of the cost function $\widehat{G}$ is of the form
$$S_{|\mathcal{K}_+|,\mu}(U)=U^{\dg}\left(D_{|\mathcal{K}_+|,\mu}-\frac{1}{2}\mu\imath \mathbb{I}_N\right)U,$$
where $U\in \UN$ and $$D_{|\mathcal{K}_+|,\mu}=\sqrt{1-\frac{\mu^2}{4}}\,\text{diag}(\underbrace{1,\dots,1}_{|\mathcal{K}_+|\rm\ times},\underbrace{-1,\dots,-1}_{N-|\mathcal{K}_+|\rm\ times}).$$
Note that $0<|\mathcal{K}_+|<N$. 
We have seen that 
$$\text{Hess}_{\SUN}\, \widehat{G}(S_{|\mathcal{K}_+|,\mu}(U))(\Omega S_{|\mathcal{K}_+|,\mu}(U),\Omega S_{|\mathcal{K}_+|,\mu}(U))=\sqrt{1-\frac{\mu^2}{4}}\tr\left(\widetilde{\Omega}^2\text{diag}(\underbrace{1,\dots,1}_{|\mathcal{K}_+|\rm\ times},\underbrace{-1,\dots,-1}_{N-|\mathcal{K}_+|\rm\ times})\right),$$ 
with $\widetilde{\Omega}=U\Omega U^{\dg}\in \mathfrak{su}(N)$.

We look for matrices $\widetilde{\Omega}\in \mathfrak{su}(N)$ of the form
$$\widetilde{\Omega}=\text{diag}(-(d_2+\dots+d_N)\imath,d_2\imath,\dots,d_N\imath),$$
with arbitrary $d_2,\dots,d_N\in \R$ such that $\tr(\widetilde{\Omega}^2\text{diag}(\underbrace{1,\dots,1}_{|\mathcal{K}_+|\rm\ times},\underbrace{-1,\dots,-1}_{N-|\mathcal{K}_+|\rm\ times}))>0$ and respectively $\tr(\widetilde{\Omega}^2\text{diag}(\underbrace{1,\dots,1}_{|\mathcal{K}_+|\rm\ times},\underbrace{-1,\dots,-1}_{N-|\mathcal{K}_+|\rm\ times}))<0$.

By direct computation we obtain 
\begin{align*}
\tr(\widetilde{\Omega}^2\text{diag}(\underbrace{1,\dots,1}_{|\mathcal{K}_+|\rm\ times},\underbrace{-1,\dots,-1}_{N-|\mathcal{K}_+|\rm\ times}))= & -2\left(d_2+\dots+d_{|\mathcal{K}_+|}+d_{|\mathcal{K}_+|+2}+\dots+d_N\right)d_{|\mathcal{K}_+|+1} \\
& +\text{terms depending only on}\,\, d_2,\dots,d_{|\mathcal{K}_+|},d_{|\mathcal{K}_+|+2},\dots,d_N.
\end{align*}
If we fix the values of $d_2,\dots,d_{|\mathcal{K}_+|},d_{|\mathcal{K}_+|+2},\dots,d_N$ such that they have nonzero sum, then the expression above can be regarded as a non-constant linear function of variable $d_{|\mathcal{K}_+|+1}$ , which can take any real value.
Therefore, the critical point $S_{|\mathcal{K}_+|,\mu}(U)$ is a saddle point.\medskip

{\bf II.} If $N=4M$, $M\in \mathbb{N}$, and $|\mathcal{K}_+|=2M$, the equation from the second step of the algorithm becomes an identity and, consequently, we obtain a continuous family of critical points parameterized by $\mu\in [-2,2]$ 
$$S_{2M,\mu}(U)=U^{\dg}\text{diag}\left(\underbrace{\sqrt{1-\frac{\mu^2}{4}}-\frac{\mu\imath}{2}, \dots,\sqrt{1-\frac{\mu^2}{4}}-\frac{\mu\imath}{2}}_{2M\,\text{times}},\underbrace{ -\sqrt{1-\frac{\mu^2}{4}}-\frac{\mu\imath}{2}, \dots ,-\sqrt{1-\frac{\mu^2}{4}}-\frac{\mu\imath}{2}}_{2M\,\text{times}} \right)U,$$ with an arbitrary $U\in \mathcal{U}(N)$.

The Hessian matrix is given by
$$\text{Hess}_{\SUN}\, \widehat{G}(S_{2M,\mu}(U))(\Omega S_{2M,\mu}(U),\Omega S_{2M,\mu}(U))=\sqrt{1-\frac{\mu^2}{4}}\tr\left(\widetilde{\Omega}^2\text{diag}(\underbrace{1,\dots,1}_{2M\rm\ times},\underbrace{-1,\dots,-1}_{2M\rm\ times})\right),$$ 
with $\widetilde{\Omega}=U\Omega U^{\dg}\in \mathfrak{su}(N)$.

If $\mu\neq \pm 2$, then we obtain a continuous family of saddle points $S_{2M,\mu}(U)$, using the same argument as above.

For the limit cases $\mu=\pm 2$ the Hessian matrices are identically zero, but these critical points are also saddle points (see Appendix).\medskip

$(ii)$ For the case $(a)$, which corresponds to $N=2P$, the critical points depend on the solutions of the equation $z^N=1$ for which $\text{Re}(z)\geq 0$. The solution $z=1$ of this equation leads to the critical point $S_{0,0}$. We notice that $\widehat{G}(S_{0,0})=-N$ and $\widehat{G}(S)> -N$ for any critical point $S\neq S_{0,0}$. In conclusion, $S_{0,0}=-\mathbb{I}_N$ is the only global minimum (which exists because $\SUN$ is a compact manifold). When $N$ is a multiple of 4, then $z=\pm\imath$ are also solutions of the equation $z^N=1$. Therefore $S_{0,\pm 2}=\mp\imath\mathbb{I}_N$   are critical points and since $S_{0,\pm 2}=S_{P,\pm 2}$, it follows from $(i)$ that $S_{0,\pm 2}$ are saddle points. For all the other critical points, we have 
$$\text{Hess}_{\SUN}\, \widehat{G}(S_{0,\mu})(\Omega S_{0,\mu},\Omega S_{0,\mu})= -\sqrt{1-\frac{\mu^2}{4}}\tr(\widetilde{\Omega}^2)=\sqrt{1-\frac{\mu^2}{4}}\|\widetilde{\Omega}^2\|_F.$$
Since $\widehat{G}(S_{0,\mu})>-N$, we obtain that $S_{0,\mu}$ are local minima which are not global minima.

For the case $(b)$, finding the critical points reduces to solve the equation $z^N=-1$, which has the solutions 
$z_k=-\cos\left(\frac{2k\pi}{2P+1}\right)\pm\sin\left(\frac{2k\pi}{2P+1}\right)\imath$, $k\in \{0,\dots,P\}$. When $\cos\left(\frac{2k\pi}{2P+1}\right)\leq 0$ we obtain critical points with $\mu_k=\pm 2\sin\left(\frac{2k\pi}{2P+1}\right)$ and with the same argument as above the Hessian in these critical points is positive definite. In conclusion, all these critical points are local minima. We have that $\widehat{G}(S_{0,\mu_k})=-N\sqrt{1-\frac{\mu_k^2}{4}}=N\cos\left(\frac{2k\pi}{2P+1}\right)$. We observe that $\widehat{G}(S_{0,\mu_P})=\widehat{G}(S_{0,-\mu_P})<\widehat{G}(S_{0,\mu_k})$ for all $k\neq P$. Therefore, the critical points $S_{0,\mu_P}$ and $S_{0,-\mu_P}$ are global minima and all the other critical points are local minima which are not global minima.

$(iii)$ We have $\widehat{G}(S_{N,0})=N$ and $\widehat{G}(S)< N$ for any critical point $S\neq S_{N,0}$. In conclusion, $S_{N,0}=\mathbb{I}_N$ is the only global maximum (which exists because $\SUN$ is a compact manifold). The argument that all the other critical points are local maxima which are not global maxima is analogous with the one used in $(ii)$.

\end{proof}

Regarding the quantum control landscape problem we can synthesize the above results in the following theorem.
We remind that we work under the hypothesis that the quantum controlled system \eqref{control} is globally controllable.

\begin{thm} Assume that the quantum control landscape function $J$ is constructed with the trace fidelity function $\widetilde{G}$, where the target gate $A$ is a special unitary matrix from $\SUN$. Then,
\begin{enumerate}[(i)]

\item for $N\geq 5$, the landscape function $J$ is {\bf not} $\SUN$-trap free (i.e. there are always kinematic local extrema that are not global extrema);

\item  for $N<5$, the landscape function $J$ is kinematic $\SUN$-trap free (in the sense that there are no kinematic local extrema that are not global extrema). 
\end{enumerate}

\end{thm}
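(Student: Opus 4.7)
The plan is to derive both parts directly from the exhaustive classification of kinematic critical points of $\widehat{G}$ in the preceding theorem, combined with the reduction $\widetilde{G}(AS)=\widehat{G}(S)$ that transports that classification to $\widetilde{G}$. Global controllability of \eqref{control} enters only through surjectivity of $\text{End}_{U_0}$, which is used to transfer kinematic information to the landscape $J=\widetilde{G}\circ \text{End}_{U_0}$.

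For part $(i)$ I will produce a non-global kinematic local extremum for every $N\geq 5$. The cleanest source is case $(iii)$ of the preceding theorem: the equation $z^N=1$ with $\text{Re}(z)\geq 0$ admits $z=e^{\pm 2\pi\imath/N}$ as solutions distinct from $z=1$ because $\cos(2\pi/N)>0$ for $N\geq 5$. These yield local maxima $S_{N,\mu}(U)$ with $\mu=\pm 2\sin(2\pi/N)$ and value $\widehat{G}(S_{N,\mu})=N\cos(2\pi/N)<N$; the reduction transports them to non-global local maxima $S^*:=AS_{N,\mu}(U)$ of $\widetilde{G}$. Surjectivity of $\text{End}_{U_0}$ furnishes $\bar u^*$ with $\text{End}_{U_0}(\bar u^*)=S^*$; the First Layer makes $\bar u^*$ a dynamical critical point, and Fr\'echet continuity of the endpoint map combined with the local-maximum property of $\widetilde{G}$ at $S^*$ gives $J(\bar u)\leq J(\bar u^*)$ on a neighborhood of $\bar u^*$. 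Since $J(\bar u^*)$ is strictly below the global maximum $\widetilde{G}(A)$, which is itself attained by $J$, the point $\bar u^*$ is a non-global local maximum of $J$, ruling out $\SUN$-trap freeness under both the strict and relaxed definitions.

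For part $(ii)$ it suffices to verify that for each $N\in\{2,3,4\}$ every kinematic critical point of $\widehat{G}$ produced by the classification is either a global extremum or a saddle. For $N=2$, both $z^2=1$ (cases $(ii)(a)$ and $(iii)$) restricted to $\text{Re}(z)\geq 0$ give only $z=1$ and hence only $\pm\mathbb{I}_2$, while case $(i)$ with $|\mathcal{K}_+|=1$ yields no solution. For $N=3$, case $(ii)(b)$ with $P=1$ gives exactly the two global minima attached to $z=e^{\pm \pi\imath/3}$, case $(iii)$ gives only $\mathbb{I}_3$, and case $(i)$ contributes solely saddles. For $N=4$, case $(ii)(a)$ with $P=2<3$ contributes only $-\mathbb{I}_4$ together with the points $\pm\imath\mathbb{I}_4$, which the theorem identifies as saddles because $4$ is a multiple of $4$; case $(iii)$ contributes $\mathbb{I}_4$ and the same $\pm\imath\mathbb{I}_4$; and case $(i)$ contributes only saddles, including the continuous family $S_{2,\mu}(U)$.

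The main obstacle is the degenerate regime $\mu=\pm 2$ for $N=4$, where the matrices $\pm\imath\mathbb{I}_4$ appear simultaneously in cases $(i)$, $(ii)(a)$ and $(iii)$ and the Hessian quadratic form vanishes identically; asserting the saddle-point conclusion there relies on the higher-order computation from the appendix invoked in case $(i)$, not on a sign-definite Hessian argument. The remainder is a routine enumeration of the classification together with the short continuity-plus-surjectivity lifting used in part $(i)$.
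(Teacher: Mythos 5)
Your proof is correct and follows the same route the paper intends: the theorem is stated there without a separate proof, as a direct synthesis of the preceding classification of the critical points of $\widehat{G}$, and your identification of the non-global local maxima $S_{N,\mu}(U)$ with $\mu=\pm 2\sin(2\pi/N)$ for $N\geq 5$, together with the case enumeration for $N=2,3,4$ (including the degenerate saddles $\pm\imath\,\mathbb{I}_4$ handled by the appendix lemma), is exactly the relevant content of that classification. The explicit lifting of kinematic local extrema to local extrema of $J$ via surjectivity and continuity of $\text{End}_{U_0}$ is a detail the paper leaves implicit, and you carry it out correctly.
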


\medskip


\appendix
\section{Appendix}

\begin{lem}
Let $(\mathfrak{M},{\bf g})$ be a Riemannian manifold and $f:\mathfrak{M}\rightarrow \R$ be a smooth function. Let $\lambda:[a,b]\subset\R\rightarrow \mathfrak{M}$ be a continuous curve of critical points for $f$, such that $f(\lambda(t))$ is constant for $t\in [a,b]$. Also, assume that, for every $t\in (a,b)$, the critical points $\lambda(t)$ are saddle points\footnote{A critical point $x_0\in \mathfrak{M}$ is called saddle point for $f$ if in any neighborhood of $x_0$ there exist $p,q$ such that $f(p)<f(x_0)<f(q)$.}. Then, $\lambda(a)$ and $\lambda(b)$ are also saddle points for $f$.
\end{lem}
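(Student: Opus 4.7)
The plan is to exploit the fact that $f$ takes the same value $c$ at every point of the curve $\lambda$, combined with the continuity of $\lambda$, to transport the saddle property from interior points to the endpoints. The argument is purely topological and does not use any differential structure of $f$ or $\mathfrak{M}$.

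First, I would set $c:=f(\lambda(t))$, which by hypothesis is independent of $t\in[a,b]$, so in particular $c=f(\lambda(a))=f(\lambda(b))$. To show that $\lambda(a)$ is a saddle point, I fix an arbitrary open neighborhood $U$ of $\lambda(a)$ in $\mathfrak{M}$ and aim to produce points $p,q\in U$ with $f(p)<c<f(q)$.

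By continuity of the curve $\lambda$ at $t=a$, there exists $\delta>0$ such that $\lambda(t)\in U$ for all $t\in[a,a+\delta)$. Choosing any $t_0\in(a,\min\{b,a+\delta\})$, the point $\lambda(t_0)$ lies in $U$, so $U$ is also an open neighborhood of $\lambda(t_0)$. Since $t_0\in(a,b)$, the hypothesis guarantees that $\lambda(t_0)$ is a saddle point of $f$, so by definition there exist $p,q\in U$ with $f(p)<f(\lambda(t_0))<f(q)$. But $f(\lambda(t_0))=c=f(\lambda(a))$, hence $f(p)<f(\lambda(a))<f(q)$, and $\lambda(a)$ is a saddle point. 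The argument for $\lambda(b)$ is identical, using continuity of $\lambda$ at $t=b$ and choosing $t_0\in(\max\{a,b-\delta\},b)$.

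I do not expect any genuine obstacle here; the content of the lemma is simply that on a connected family of critical points sharing a common critical value, the saddle property is a closed condition on the parameter. The only step worth being careful about is ensuring that the neighborhood $U$ of the endpoint $\lambda(a)$ can be reused as a neighborhood of a nearby $\lambda(t_0)$, which is immediate since $U$ is open and contains $\lambda(t_0)$.
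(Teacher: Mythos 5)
Your proof is correct and follows essentially the same approach as the paper: use continuity of $\lambda$ to find an interior saddle point $\lambda(t_0)$ inside a given neighborhood of the endpoint, apply the saddle definition there, and transfer the inequalities via the constancy of $f\circ\lambda$. The only cosmetic difference is that the paper nests a small ball around $\lambda(c)$ inside the ball around $\lambda(a)$, whereas you reuse the neighborhood $U$ directly, which is equally valid.
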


\begin{proof}
Let ${\bf B}(\lambda(a),\varepsilon)$ the ball centered in $\lambda(a)$ having radius $\varepsilon>0$. From the continuity of $\lambda$ there exists $c\in (a,b)$ such that $\lambda(c)\in {\bf B}(\lambda(a),\varepsilon)$. We choose a radius $\eta>0$ such that ${\bf B}(\lambda(c),\eta)\subset {\bf B}(\lambda(a),\varepsilon)$. Since $\lambda(c)$ is a saddle point for $f$ there exist $p,q\in  {\bf B}(\lambda(c),\eta)$ such that $f(p)<f(\lambda(c))<f(q)$. Because $f(\lambda(a))=f(\lambda(c))$ it follows that $p,q \in {\bf B}(\lambda(c),\eta)$ and $f(p)<f(\lambda(a))<f(q)$. Therefore, $\lambda(a)$ is also a saddle point for $f$. A similar argument holds for the critical point $\lambda(b)$.
\end{proof}



\begin{thebibliography}{99}
\bibitem{abraham} R. Abraham, J.E. Marsden, T. Ratiu, Manifolds, Tensor Analysis, and Applications, Third Edition, Springer-Verlag (2001)
\bibitem{abrudan} T.E. Abrudan, J. Eriksson, V. Koivunen, Steepest descent algorithms for optimization under unitary matrix constraint,
IEEE Transactions on Signal Processing 56, 1134-1147 (2008)
\bibitem{absil-carte}P.-A. Absil, R. Mahony, R. Sepulchre, Optimization Algorithms on Matrix Manifolds, Princeton University Press (2008)
\bibitem{absil-malick}  P.-A. Absil, J. Malick, Projection-like retractions on matrix manifolds, SIAM J. Optim. 22, 135–158 (2012)
\bibitem{boumal-2}N. Agarwal, N. Boumal, B. Bullins, C. Cartis, Adaptive regularization with cubics on manifolds, Mathematical Programming, \url{https://doi.org/10.1007/s10107-020-01505-1} (2020)
\bibitem{agrachev}A. Agrachev, D. Barilari, U. Boscain, A Comprehensive Introduction to Sub-Riemannian Geometry, Cambridge University Press (2020) 
\bibitem{altafini}A. Altafini, Controllability of quantum mechanical systems by root space decomposition of $\mathfrak{su}(N)$, J. Math. Phys. 43, 2051-2062 (2002)
\bibitem{birtea-comanescu}P. Birtea, D. Com\u anescu, Geometric dissipation for dynamical systems, Comm. Math. Phys. 316, 375-394 (2012)
\bibitem{Birtea-Comanescu-Hessian}P. Birtea, D. Com\u anescu, Hessian operators on constraint manifolds, J. Nonlinear Science 25, 1285-1305 (2015)
\bibitem{5-electron}P. Birtea, D. Com\u anescu, Newton algorithm on constraint manifolds and the 5-electron Thomson problem, J. Optim. Theor. Appl. 173, 563-583 (2017)
\bibitem{boumal-1} N. Boumal, An introduction to optimization on smooth manifolds, \url{http://sma.epfl.ch/~nboumal/book/IntroOptimManifolds_Boumal_2020.pdf} (2020)
\bibitem{rabitz-2007}R. Chakrabarti, H. Rabitz, Quantum control landscapes, International Reviews in Physical Chemistry 26, 671-735 (2007)
\bibitem{dominy-rabitz}J.M. Dominy, Tak-San Ho, H.A. Rabitz, Characterization of the Critical Sets of Quantum Unitary Control Landscapes, IEEE Transactions on Automatic Control 59, 2083-2098 (2014)
\bibitem{edelman}{A. Edelman, T.A. Arias, S.T. Smith},{The
geometry of algorithms with orthogonality constraints}, SIAM J. Matrix Anal. Appl. 20, 303-353 (1998)
\bibitem{fiori-su(2)}S. Fiori, Learning by criterion optimization on a unitary unimodular matrix group, International Journal of Neural Systems 18, 87-103 (2008)
\bibitem{fiori-su(3)}S. Fiori, A study on neural learning on manifold foliations: the case of the Lie group $\mathcal{SU}(3)$, Neural Computation 20, 1091-1117 (2008)
\bibitem{fouquieres}P. de Fouquieres, S.G. Schirmer, A
closer look at quantum control landscapes and their implication for control optimization, Infinite Dimensional Analysis, Quantum Probability and Related Topics 16, 1350021 (2013)
\bibitem{gao-1}B. Gao, N.T. Son, P.-A. Absil, T. Stykel, Riemannian optimization on the symplectic Stiefel manifold, arXiv preprint, arXiv:2006.15226 (2020)
\bibitem{gao-2}B. Gao, N.T. Son, P.-A. Absil, T. Stykel, Geometry of the symplectic Stiefel manifold endowed
with the Euclidean metric, arXiv preprint, arXiv:2103.00459 (2021)
\bibitem{hu}Jiang Hu, Xin Liu, Zai-Wen Wen, Ya-Xiang Yuan, A Brief Introduction to Manifold Optimization, Journal of the Operations Research Society of China 8, 199–248 (2020)
\bibitem{manton}J.H. Manton, Optimization algorithms exploiting unitary constraints, IEEE Transactions on Signal Processing 50, 635-650 (2002)
\bibitem{mishra}B. Mishra, R. Sepulchre, Riemannian preconditioning, SIAM Journal on Optimization 26, 635-660 (2016)
\bibitem{montgomery}R. Montgomery, A Tour of Subriemannian Geometries, Their Geodesics and Applications, American Mathematical Society (2006)
\bibitem{Rabitz-relaxed}K.W. Moore, H. Rabitz, Exploring
constrained quantum control landscapes, J. Chem. Phys. 137, 134113 (2012)
\bibitem{palao}J.P. Palao, R. Kosloff, Quantum Computing by an Optimal Control Algorithm for Unitary Transformations, Phys. Rev. Lett. 89, 188301 (2002)
\bibitem{pechen-2}A. Pechen, N. Il'in, Trap-free manipulation in the Landau-Zener system, Phys. Rev. A 86, 052117 (2012)
\bibitem{pechen-3} A.N. Pechen, N.B. Il'in, Coherent control of a qubit is trap-free, Proceedings of the Steklov Institute of Mathematics 285, 233–240 (2014)
\bibitem{pechen-5} A.N. Pechen, N.B. Ilin, Control landscape for ultrafast manipulation by a qubit, Journal of Physics A: Mathematical and Theoretical 50, 75301 (2017)
\bibitem{pechen-1} A.N. Pechen, D.J. Tannor, Are there traps in quantum control landscapes?, Phys. Rev. Lett. 106, 120402 (2011)
\bibitem{pechen-4} A.N. Pechen, D.J. Tannor, Control of quantum transmission is trap-free, Canadian Journal of Chemistry 92, 157–159 (2014)
\bibitem{Rabitz-Science}H.A. Rabitz, M.M. Hsieh, C.M. Rosenthal,
Quantum optimally controlled transition landscapes,
Science 303, 1998-2001 (2004)
\bibitem{russell}B. Russell, H. Rabitz, Common foundations of optimal control across the sciences: evidence of a free lunch, Phil. Trans. R. Soc. A 375, 20160210 (2017)
\bibitem{russell-wu}B. Russell, H. Rabitz, Re-Bing Wu, Control landscapes are almost always trap free: a geometric assessment, J. Phys. A 50, 205302 (2017) 
\bibitem{helmke}T. Schulte-Herbr\"{u}ggen, S.J. Glaser, G. Dirr, U. Helmke, Gradient flows for optimization in quantum information and quantum dynamics: foundations and applications, Reviews in Mathematical Physics 22, 597-667 (2010)
\bibitem{shlomo}S.E. Sklarz, D.J. Tannor, Quantum computation via local control theory: Direct sum vs. direct product Hilbert spaces, Chemical Physics 322, 87-97 (2006)
\bibitem{zhdanov-1}D.V. Zhdanov, Theory of quantum control landscapes: Overlooked hidden cracks, arXiv preprint, arXiv:1710.07753 (2017)	
\bibitem{zhdanov-2}D.V. Zhdanov, Comment on 'Control landscapes are almost always trap free: a geometric assessment', Journal of Physics A: Mathematical and Theoretical 51, 508001 (2018)

\end{thebibliography}
\end{document}